\let\epsilon=\varepsilon
\newtheorem{theorem}{Theorem}
\newtheorem{lemma}[theorem]{Lemma}
 \gdef\xxxmark{%
   \expandafter\ifx\csname @mpargs\endcsname\relax 
     \expandafter\ifx\csname @captype\endcsname\relax 
       \marginpar{xxx}
     \else
       xxx 
     \fi
   \else
     xxx 
   \fi}
 \gdef\xxx{\@ifnextchar[\xxx@lab\xxx@nolab}
 \long\gdef\xxx@lab[#1]#2{\textbf{[\xxxmark #2 ---{\sc #1}]}}
 \long\gdef\xxx@nolab#1{\textbf{[\xxxmark #1]}}
\long\gdef\xxx@lab[#1]#2{}\long\gdef\xxx@nolab#1{}%
\newcommand{\In}{\textsc{In}}
\newcommand{\Out}{\textsc{Out}}
\newcommand{\inin}{\textsc{In-In}}
\newcommand{\inout}{\textsc{In-Out}}
\newcommand{\outout}{\textsc{Out-Out}}
\title{Continuous Flattening of All Polyhedral Manifolds \\
       using Countably Infinite Creases}
\author{%
  Zachary Abel%
    \thanks{MIT EECS Department}
\and
  Erik D. Demaine%
    \thanks{MIT Computer Science and Artificial Intelligence Laboratory}
\and
  Martin L. Demaine\footnotemark[2]
\and
  Jason~S.~Ku%
    \footnotemark[1]
\and
  Jayson Lynch%
    \footnotemark[2]
\and
  Jin-ichi Itoh%
    \thanks{School of Education, Sugiyama Jogakuen University}
\and
  Chie Nara%
    \thanks{Meiji Institute for Advanced Study of Mathematical Sciences,
      Meiji University}
}
\date{}
\begin{document}

\maketitle

\begin{abstract}

We prove that any finite polyhedral manifold in 3D can be continuously
flattened into 2D while preserving intrinsic distances and avoiding crossings,
answering a 19-year-old open problem, if we extend standard folding models to
allow for countably infinite creases.
The most general cases previously known to be continuously
flattenable were convex polyhedra and semi-orthogonal polyhedra.
For non-orientable manifolds, even the existence of an instantaneous
flattening (flat folded state) is a new result.
Our solution extends a method for flattening semi-orthogonal polyhedra:
slice the polyhedron along parallel planes and flatten the polyhedral strips
between consecutive planes.
We adapt this approach to arbitrary nonconvex polyhedra by generalizing strip
flattening to nonorthogonal corners and slicing along a countably
infinite number of parallel planes, with slices densely approaching every
vertex of the manifold.  We also show that the area of the polyhedron that
needs to support moving creases (which are necessary for closed polyhedra
by the Bellows Theorem) can be made arbitrarily small.

\end{abstract}


\section{Introduction}

We crush polyhedra flat all the time, such as when we recycle cereal boxes or
store airbags in a steering wheel.  But is this actually possible without
tearing or stretching the material?  This problem was first posed in 2001
\cite{Demaine-Demaine-Lubiw-2001-flattening-manuscript} (see
\cite[Chapter~18]{GFALOP}): does every polyhedron have a continuous motion that
preserves the metric (intrinsic shortest paths), avoids crossings, and ends in a
flat folded state? This problem is Open Problem 18.1 of the book \emph{Geometric
Folding Algorithms} \cite{GFALOP}. In this paper, we solve this 19-year-old open
problem with a positive answer: every polyhedron can be continuously flattend.
Specifically, we prove for a broad definition of \emph{polyhedron}: any compact
polyhedral 2-manifold (possibly with boundary) embedded in 3D and having
finitely many polygonal faces. However, our result is arguably in a model not
intended by the original problem: our folding has countably infinitely many
creases at all times.

A necessary first step is to show that every polyhedron has a flat folded
state (the end of the desired flattening motion).  This problem was also
first posed in 2001 \cite{Demaine-Demaine-Lubiw-2001-flattening-manuscript},
where it was solved for convex and semi-orthogonal polyhedra.%
\footnote{In a \emph{semi-orthogonal} polyhedron, every facet is
  either parallel or perpendicular to a common plane.  Thus,
  in some orientation, the faces are all horizontal (parallel to the floor)
  or vertical (perpendicular to the floor).}
Later, Bern and Hayes \cite{Bern-Hayes-2011-flattening}
solved the problem for \emph{orientable} polyhedral manifolds,
generalizing a previous solution for sphere or disk topology
\cite{GFALOP}.
This result solved Open Problem 18.2 of~\cite{GFALOP}
(also originally posed in 2001
\cite{Demaine-Demaine-Lubiw-2001-flattening-manuscript}),
except for non-orientable polyhedral manifolds, which we solve here.

Continuous flattening necessarily requires continuously moving/sliding the
creases on the surface over time (for polyhedra enclosing a volume):
if all creases remained fixed throughout the motion
(and the set of creases is finite), then
the Bellows Theorem \cite{Connelly-Sabitov-Walz-1997}
tells us that the volume would remain fixed, so could not decrease to zero.
A natural question, though, is how much area of the surface needs to be
flexible in the sense of supporting moving creases, and how much can be
made of rigid panels connected by hinges.
Abel et al.~\cite{abel2015rigid} showed that a surprisingly small but finite
slit suffices for continuous flattening of a regular tetrahedron.
Matsubara and Nara \cite{nara2016infimum} recently showed that an arbitrarily
small area of flexibility suffices for $\alpha$-trapezoidal polyhedra.
In this paper, we show that an arbitrarily small area of flexibility
suffices for any polyhedral manifold.

Several previous results constructed continuous flattenings of special classes
of polyhedra.  Itoh and Nara \cite{Itoh-Nara-2011} solved Platonic solids while
preserving two faces, and later with V\^{\i}lcu \cite{itoh2011continuous} solved
convex polyhedra using Alexandrov surgery (which is difficult to compute).  At
SoCG 2014, Abel et al.~\cite{abel2014continuously} solved convex polyhedra using
a simple algorithm that respects the \emph{straight skeleton gluing},
corresponding to the intuitive way to flatten a polyhedron, and solving Open
Problem 18.3 of \cite{GFALOP} (the last open problem of Chapter~18, also
originally posed in 2001
\cite{Demaine-Demaine-Lubiw-2001-flattening-manuscript}). Unfortunately, this
approach seems difficult to extend to nonconvex polyhedra. More recently, a
slicing approach (dating back to
\cite{Demaine-Demaine-Lubiw-2001-flattening-manuscript}) was shown to
continuously flatten semi-orthogonal polyhedra
\cite{FlatteningOrthogonal_JCDCGG2015full}. In this paper, we extend this
slicing approach in several ways to solve arbitrary polyhedral manifolds.


\subsection{Approach}

We generalize the slicing approach of \cite{FlatteningOrthogonal_JCDCGG2015full},
which conceptually cuts the polyhedron along parallel planes through every
vertex, and several additional planes in between so that the resulting
\emph{slabs} (portions of the polyhedron between consecutive planes) are
``short''.
In \cite{FlatteningOrthogonal_JCDCGG2015full}, each slab is an orthogonal
band, which is relatively easy to flatten continuously. The key difference in
our case is that the slabs are much more general: in general, a slab in a
polyhedron is a \emph{prismatoid} (excluding the top and bottom faces),
that is, a polyhedron whose vertices lie in two parallel planes,
whose faces are triangles and trapezoids spanning both planes.
Unfortunately, prismatoids seem extremely difficult to flatten continuously,
as original polyhedron vertices are particularly difficult to handle
in the general case.

To circumvent this challenge, we instead target the flattening of
\emph{prismoids}: prismatoids whose spanning faces are only trapezoids having
parallel top and bottom edges (i.e., no triangular spanning faces), where every
vertex is incident to at most two spanning trapezoids. We will use the term
\emph{cylindrical prismoid} to refer to the spanning faces of a prismoid,
without the top and bottom face. A key innovation in our approach is to divide a
polyhedral manifold using \emph{countably infinitely} many parallel planar cuts,
with slabs approaching zero height as we approach polyhedron vertices. As a
result, \emph{all} slabs consist of disjoint cylindrical prismoids.
The key property is that original polyhedron vertices do not appear on the
boundary of \emph{any} slab, because any such slab would get divided in half
through countably infinite recursion. 

Note that since we allow polyhedral
manifolds with boundary, a component within a slab may only be a subset of a
cylindrical prismatoid, we call a \emph{prismoidal wall}; this
generalization is discussed in Section~\ref{sec:slice}.


\subsection{Outline}

We implement the approach described above in a bottom-up fashion.
First, Section~\ref{sec:model} formally defines our model of folding. Next, Section~\ref{sec:collapse} shows how to
collapse prismoid edges and faces by constructing generalized \inout\ and
\outout\ gadgets.
Then, Section~\ref{sec:slice} shows how to slice the input
polyhedral manifold so that we can flatten subsets of it using the methods from
Section~\ref{sec:collapse}.
Finally, Section~\ref{sec:proof} puts these algorithms together to prove
the following theorem:

\begin{theorem}
\label{thm:flatten}
Any compact polyhedral 2-manifold (possibly with boundary) embedded in 3D
and having finitely many polygonal faces can be continuously
flattened while preserving intrinsic distances and avoiding crossings.
A flattening motion exists such that at all times during the flattening motion (except the beginning),
the folded form consists of countably infinitely many creases,
with finitely many accumulation lines.
Furthermore, the area supporting moving creases can be made arbitrarily small.
\end{theorem}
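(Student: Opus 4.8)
The plan is to assemble the machinery developed in the preceding sections into a single flattening motion, proceeding in the bottom-up order the outline promises. First I would invoke the slicing construction of Section~\ref{sec:slice} to decompose the input manifold $M$ into countably infinitely many slabs by parallel planar cuts. The crucial design choice, already flagged in the Approach subsection, is to place cuts not only through every vertex but to recurse dyadically toward each vertex, so that slab heights tend to zero near every original polyhedron vertex. I would verify that this guarantees the key structural property: no original vertex lies on the boundary of any slab, hence every slab is a disjoint union of cylindrical prismoids (or, for manifolds with boundary, prismoidal walls). This reduces the global flattening problem to the independent flattening of each prismoidal piece.

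Next I would apply the collapse gadgets of Section~\ref{sec:collapse} to each cylindrical prismoid. Each prismoid's spanning trapezoids get flattened by the generalized \inout\ and \outout\ gadgets, which collapse the prismoid's edges and faces down into one of its bounding planes. The plan is to run all these per-slab motions \emph{simultaneously}, parametrized by a common time $t \in [0,1]$, and to argue that they compose into a single continuous, metric-preserving, crossing-free motion of the whole manifold. Continuity and isometry follow piecewise from the corresponding properties of the gadgets; the slabs share only the cut planes, where I would check that the gadget motions agree (or can be made to agree) along the shared boundary so that the pieces move compatibly and the folded image of each slab stays confined to a thin neighborhood of its slab, preventing collisions between distinct slabs.

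For the quantitative refinements in the statement I would argue as follows. The claim of \emph{countably infinitely many creases with finitely many accumulation lines} comes directly from the slicing: there are countably many cut planes, each contributing finitely many creases from the gadgets, and the creases accumulate only along the finitely many lines where the dyadic cuts converge toward vertices---so the accumulation lines are in bijection with (a finite set determined by) the vertices. To make the \emph{area supporting moving creases arbitrarily small}, I would exploit the freedom in the gadgets: moving creases are required only to defeat the Bellows obstruction for volume-enclosing pieces, and by choosing the flattening gadgets so that the non-rigid (crease-moving) region in each slab occupies only a small fraction of that slab's area, and summing a convergent bound over all slabs, the total flexible area can be driven below any prescribed $\epsilon$.

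The main obstacle I expect is the simultaneous-motion argument: ensuring that infinitely many per-slab motions, running at once and accumulating near the vertex lines, genuinely combine into a well-defined continuous isometry of $M$ that avoids self-crossings \emph{globally}, not merely slab-by-slab. Near an accumulation line the slabs become arbitrarily thin and numerous, so I must control the folded geometry uniformly---showing that each slab's motion keeps its image within a neighborhood whose thickness shrinks fast enough that the countably many images remain disjoint and the limiting map stays continuous. Establishing this uniform control, rather than the individual gadget constructions, is where the real work lies.
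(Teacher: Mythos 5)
Your proposal is correct and follows essentially the same route as the paper: slice into countably many prismoidal slabs with cuts accumulating at the vertices (Lemma~\ref{thm:slice}), further subdivide so the gadgets fit and are well separated (Lemmas~\ref{thm:separate} and~\ref{thm:local-bounds}), flatten all slabs simultaneously with the \inout\ and \outout\ gadgets, rule out inter-slab crossings because each slab's motion stays between its bounding planes while those planes move only vertically, and make the moving-crease area arbitrarily small by extra subdivision since it scales with the square of slab height. The global-limit difficulty you single out as ``where the real work lies'' is absorbed in the paper by its stacked folded-state model (Section~\ref{sec:model}): validity is demanded only slab-by-slab together with the stacking and limit conditions, which the confinement argument certifies.
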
   


\section{Model}
\label{sec:model}

The standard model of folding 2D surfaces in 3D \cite[Chapter~11]{GFALOP}
assumes finitely many creases, as that is the primary case of interest for
origami.  A full definition supporting countably infinitely many creases
is likely possible, but difficult, as it is no longer possible to focus on
well-behaved positive-area neighborhoods.
For the purposes of this paper, we define a limited model of folding with
countably infinite creases, where the folding decomposes into components
separated by horizontal planes, and each component is a finite-crease
folding according to \cite[Chapter~11]{GFALOP}.

Specifically, define a \emph{stacked folded state} of a polygon $P$ of paper
to consist of two components:
\begin{enumerate}

\item A decomposition of $P$ into countably many topologically closed polygonal
  regions $P_1, P_2, \dots$ (the unfolded slices, each of which can be
  disconnected), whose interior-disjoint union $\cup_{i=1}^\infty P_i$
  equals~$P$.
  The sequence $P_1, P_2, \dots$ can be (countably) infinite,
  and is in no particular order
  (in particular, it does not match the stacking order defined below
  in Property~\ref{ordered}).
  Because of the infinite decomposition, some points of $P$ belong to
  one or two $P_i$ (two in the case of shared boundary),
  while other points of $P$ may not belong to any $P_i$
  but rather exist in the limit of some sequence $P_{k_1}, P_{k_2}, \dots$.
\item A finite-crease folded state $(f_i,\lambda_i)$ of each region $P_i$
  (the folded slice),
  consisting of a geometry $f_i : P_i \to \mathbb R^3$ and a layer-ordering
  partial function $\lambda_i : P_i^2 \to \{-1,+1\}$
  (as in \cite[Chapter~11]{GFALOP}).
\end{enumerate}
These components must satisfy the following constraints:
\begin{enumerate}
\setcounter{enumi}2
\item \label{ordered}
  The decomposition $P_1, P_2, \dots$ has a total ordering $\prec$ for which
  each $P_i$ intersects only its immediate predecessor and successor in~$\prec$.
  \xxx{Is this right? A little confused about the limit points...}
\item \label{meet}
  The folded states meet on their shared boundaries, i.e.,
  $f_i(P_i \cap P_{j}) = f_{j}(P_i \cap P_{j})$ for~all~$i, j$. 
\item \label{limits meet}
  For every point $q \in P$, there is a unique point $r \in \mathbb R^3$
  (more naturally notated $f(q)$) such that,
  for every sequence $q_1, q_2, \dots$ of points in $P$
  converging to~$q$,
  if each $q_i$ belongs to a corresponding region $P_{k_i}$,
  then sequence $f_{k_i}(q_i)$ converges to $r$.
  This property guarantees a global folded-state geometry $f$ on all of~$P$,
  in particular for points that do not belong to any~$P_i$.
\item \label{slices}
  The folded states live in interior-disjoint horizontal slices of space,
  i.e., all points in $f_i(P_i)$ have $z$ coordinates in the range
  $Z_i = [z_i^-, z_i^+]$, where the intervals $Z_0, Z_1, Z_2, \dots$
  are interior-disjoint and $P_i \prec P_j$ implies $z_i^+ \leq z_j^-$.
\end{enumerate}
Intuitively, these constraints guarantee that there are no proper collisions
between different folded states $(f_i,\lambda_i)$ where they join.
Although two folded states may touch in a shared horizontal plane, the
total ordering from Property~\ref{ordered} provides a stacking order for
such overlapping layers.
A subtlety here is that, to allow the final flat folded state
where $Z_1 = Z_2 = \cdots = [z,z]$ for some $z$, we need to allow each
interval $Z_i$ to degenerate to a point, allowing for potentially many folded
states to overlap in that single $z$ plane
(without violating interior-disjointness of Property~\ref{slices}).


With this notion in hand, we can define \emph{(stacked) folding motions} as in
\cite[Chapter~11]{GFALOP}: a continuous function $M$ mapping each time
$t \in [0,1]$ to a stacked folded state, where each $P_i(t)$ and $z_i(t)$
varies continuously with time, and the restriction of $M$ to each $P_i(t)$
produces a valid folding motion of the finite-crease folded state
$(f_i(t),\lambda_i(t))$.
A \emph{(stacked) flattening motion} is a (stacked) folding motion $M$
such that the final folded state $M(1)$ lies in a single $z$ plane.


\section{Flattening Prismoids} 
\label{sec:collapse}

In this section, we show how to flatten prismoids which have a small height
relative to their other features. We will then use this technique to flatten
arbitrary polyhedral manifolds after slicing them into a countable set of such
prismoids, as detailed in Section~\ref{sec:slice}.
Section~\ref{sec:PrismoidOverview} describes an overview of our approach, and
the rest of this section describes the details of how to locally flatten the
edges and faces of a prismoid.

\subsection{Approach}
\label{sec:PrismoidOverview}

To specify the approach in more detail, let us recall the overall approach
for semi-orthogonal polyhedra from \cite{FlatteningOrthogonal_JCDCGG2015full}.
Call a prismoid edge \emph{spanning} if its endpoints lie in the top and
bottom planes, and a prismoid face \emph{spanning} if it includes vertices
in both the top and bottom planes.
Unlike in the Introduction, here we include the top and bottom horizontal faces as
part of the prismoid (which will later represent attachments to neighboring
prismoids), and we will continuously flatten while moving the horizontal faces only vertically.
The approach of~\cite{FlatteningOrthogonal_JCDCGG2015full} flattens orthogonal
spanning edges of a (possibly nonconvex) prism using two methods.
One method bends
both faces adjacent to a spanning edge \emph{toward} the convex side of the edge, while
the other method bends one face \emph{toward}, and one face \emph{away}, from the
convex side; we
call these general strategies \inin\ and \inout\ respectively. In each method,
the top face is translated down normal to the face onto the bottom face; faces
adjacent to the edge are bent using a single crease far from each spanning edge;
while additional local creases are added in order to collapse each edge. This
strategy allows a common interface between spanning edge collapsing crease
patterns so each edge can be dealt with independently, assuming the edges are
far enough apart. By alternately labeling each
face around the prism as \In\ or {\Out},
each spanning edge can then be collapsed using their \inout\
method, with the exception of perhaps one spanning edge collapsed using their
\inin\ method; see Figure~\ref{fig:prism}.

\begin{figure}
	\includegraphics[width=\linewidth]{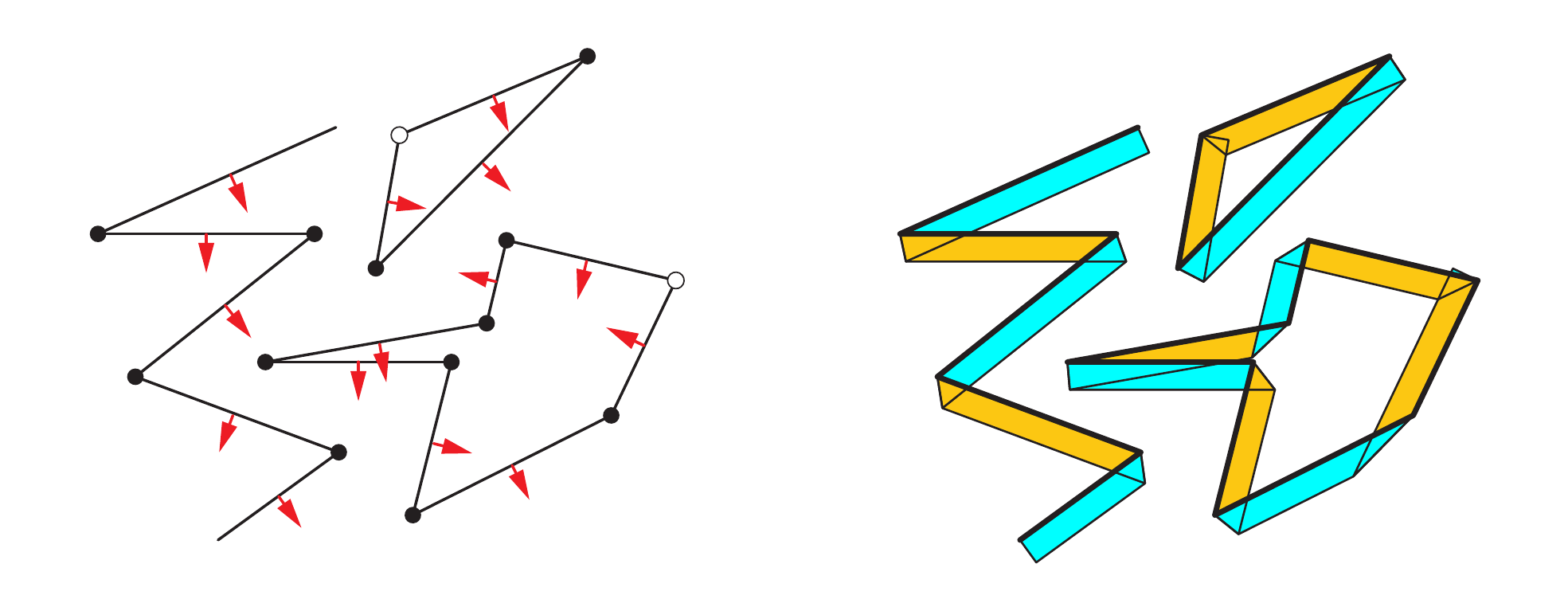}
	\caption{[Left] Top view of a semi-orthogonal set of walls, assigning a
		direction to each edge and labeling non-terminating vertices as either \inin\ in
		white or \inout\ in black. 
		[Right] The flattened state associated with this direction assignment.}
	\label{fig:prism}
\end{figure}

Our edge flattening construction generalizes their orthogonal approach for
nonorthogonal edges, allowing us to flatten general prismoids. There are a few
key differences between the gadgets presented here and the gadgets presented
in~\cite{FlatteningOrthogonal_JCDCGG2015full}. While we give a construction for a
generalized \inout\ gadget, we provide a construction for an \outout\ gadget,
bending both faces adjacent to an edge away from the convex side, instead of an
\inin\ gadget. While their orthogonal \inout\ gadget constructs three new crease
pattern vertices at any intermediate folded state to flatten each spanning edge,
our generalized \inout\ gadget requires construction of only two new vertices,
simplifying the structure. Additionally, our \outout\ gadget has the same
topological complexity as the orthogonal \inin\ gadget, both requiring
construction of two new vertices. Lastly, both orthogonal gadgets require some
adjacent faces to be coplanar and touching throughout the folding motion, which
may not be desirable; by contrast, faces in our generalized gadgets never touch
face to face except in the final flat-folded state.

\subsection{Gadget Parameterization}

The next three sections describe how to locally flatten spanning edges of a
prismoid by detailing two gadgets: an \inout\ gadget and an \outout\ gadget.
Because the top and bottom faces of a prismoid must all collapse consistently
and simultaneously, we give a single parameterization for the entire collapse;
see Figure~\ref{fig:setup} [Left]. Of the two prismoid vertices incident to
the spanning edge, at least one has an angular deficit no greater than $\pi$. We
choose such a vertex to be the \emph{primary vertex}, and let $\theta$,
$\alpha$, and $\beta$ be the three face angles incident to it, with $\theta$ the
angle at the base, and $\alpha$ and $\beta$ the two angles of the incident
spanning faces. When we speak locally of a spanning edge, the \emph{primary
vertex} will be vertex $o$, with the other \emph{non-primary vertex} being $q$.
By fixing the spanning edge to have unit length, we can uniquely specify any
prismoid spanning edge up to affine transformations by choosing $\theta$,
$\alpha$, and $\beta$ such that: 

\begin{itemize}
\setlength\itemsep{0pc}
\item $0 < \theta$ because we forbid touching faces in the input polyhedron;
\item $|\alpha - \beta| < \theta < \alpha + \beta$ or else the prismoid is already flat; and 
\item $\alpha+\beta \leq \pi$ as defined for a primary vertex.
\end{itemize}

\begin{figure}
\includegraphics[width=\linewidth]{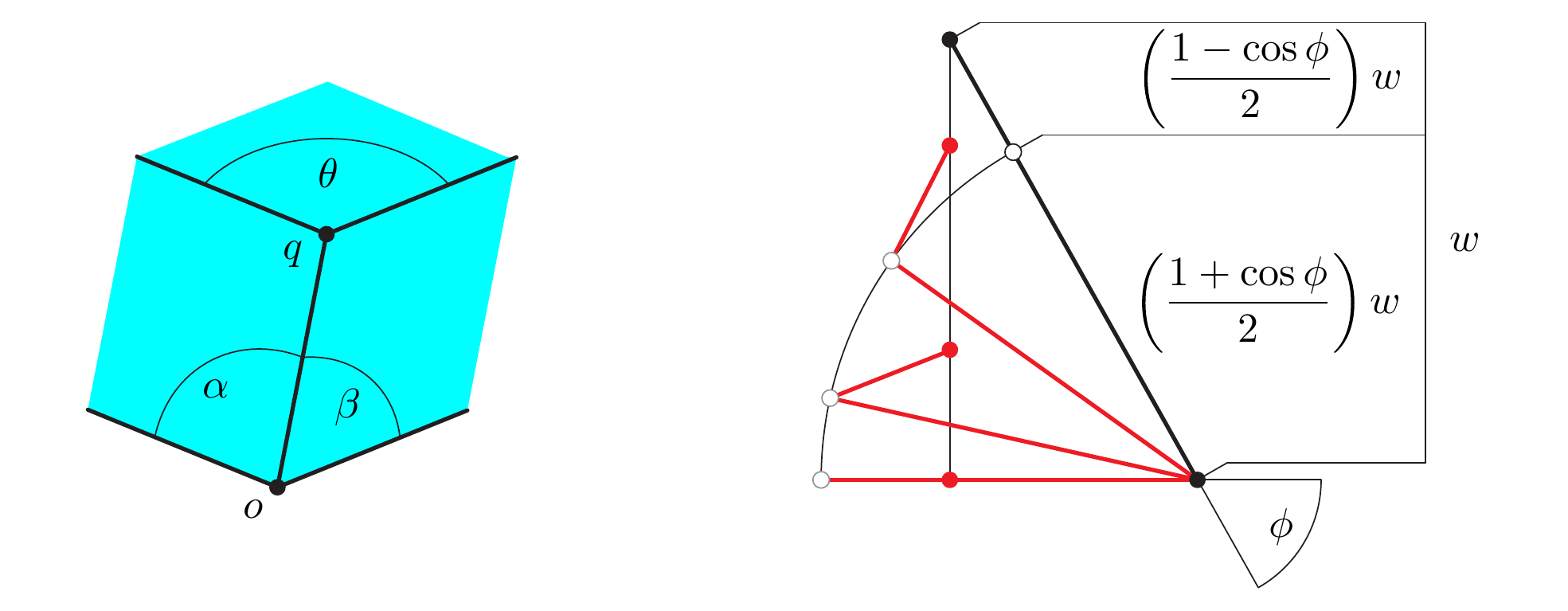}
\caption{[Left] Parameterization of a spanning edge up to affine transformation.
[Right] Cross section of a spanning face flattening along a single crease.}
\label{fig:setup}
\end{figure}

\subsection{Spanning Face Collapse}
\label{sec:facecollapse}

Each spanning face is angled relative to the top and bottom face of the
prismoid. The dihedral angle $\phi$ of the face relative to the base uniquely
determines the crease line that will collapse the face flat when sufficiently
far from a spanning edge. We will translate the top face down normal to the face
onto the bottom face; see Figure~\ref{fig:setup} [Right] for a cross
section of the face collapse. Two different single-crease solutions can allow
this flattening to occur, either flattening the face to one side or the other.
Call the width the distance between the top and bottom edge of the face.
In either case, the crease we introduce will separate the width of the face $w$
into sections of width 
\begin{equation} 
\left(\frac{1\pm \cos\phi}{2}\right)w.
\end{equation}

We call such a crease a \emph{spanning face crease}. The crease will
be closer to the bottom if the face folds toward the bottom edge, and closer to
the top if the face folds toward the top edge. Local to a vertex, we can write
$w$ and $\phi$ on both the $\alpha$ and $\beta$ sides of the edge in terms of
our parameterization:
\begin{align}
w_\alpha &= \sin\alpha,\quad \cos\phi_\alpha =
\csc\alpha(\cos\beta\csc\theta - \cos\alpha\cot\theta), \\
w_\beta &= \sin\beta,\quad \cos\phi_\beta =
\csc\beta(\cos\alpha\csc\theta - \cos\beta\cot\theta).
\end{align}

We note also that collapsing the face along this crease keeps folded material
within distance $(1-\cos\phi)w/2$ of the projection of the face onto the
prismoid base, when $\phi$ and $w$ are strictly positive. Further, it is easy to
verify that during a collapse, spanning face creases always exist between the
top and bottom faces (strictly between except in the final flat-folded state).

\begin{figure}[h!]
\includegraphics[width=\linewidth]{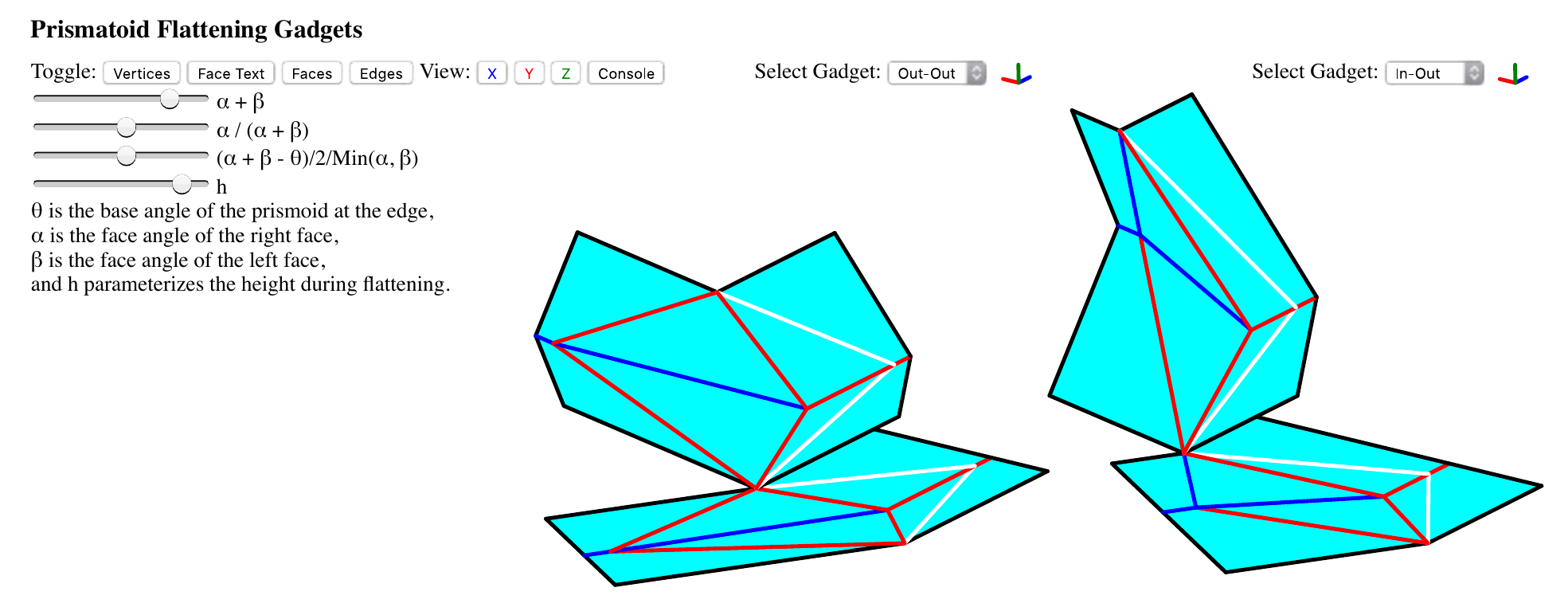}
\caption{View of the web application \cite{webapp} for interacting with the prismoid 
spanning edge flattening gadgets. The \outout\ gadget is shown on the [Left] and
the \inout\ gadget is shown on the [Right].}
\label{fig:webapp}
\end{figure}

\subsection{Interactive Gadget Visualization}

In the following two sections, we describe and analyze our \outout\ and \inout\
gadgets for flattening spanning edges. To supplement understanding, we have
implemented a web application to visualize these gadgets over the parameterized
space of possible spanning edges. You can find it here~\cite{webapp}. The
application is written in CoffeeScript and is open source. Using the app, you
can explore the different spanning edges over the parameterized space as well as
intermediate folded states of the continuous folding motion.
Figure~\ref{fig:webapp} shows a view of the interface and display. Toggling
``Vertices'' will show numeric labels for the vertices. Vertices $\{10, 13, 16,
17\}$ in the animations for both the \outout\ and \inout\ gadgets correspond to
respective points $\{o, q, p_\beta, p_\alpha\}$ in Figures~\ref{fig:outout} and
\ref{fig:inout}.

\begin{figure}
\includegraphics[width=\linewidth]{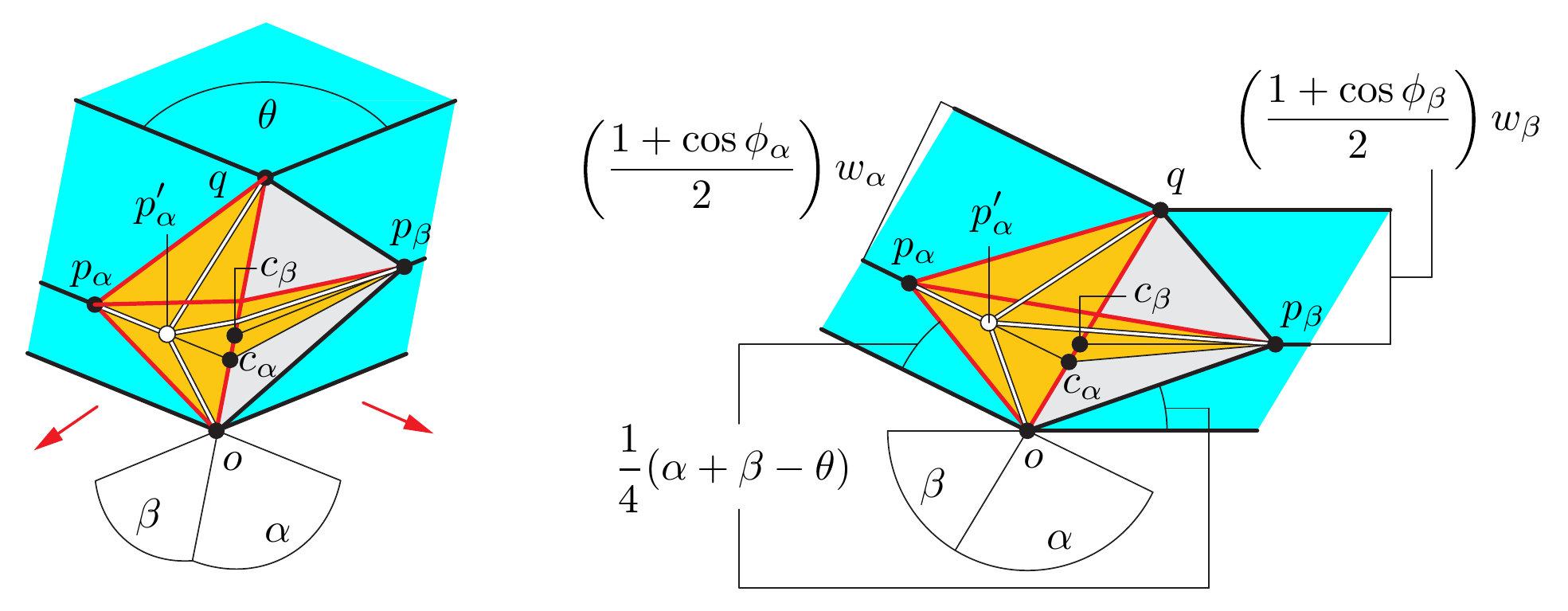}
\caption{Reference points and creases for the \outout\ gadget drawn on [Left]
the surface of a prismoid local to a spanning edge, and [Right] the development
of the spanning faces adjacent to the edge.} 
\label{fig:outout} 
\end{figure}

\subsection{\outout\ Gadget}

We describe the construction of our \outout\ gadget, and then show that it
folds continuously while preserving intrinsic distances and avoiding crossings.
We begin by constructing a flat folded state, introducing two crease pattern
vertices and show that moving one of these vertices along a line provides the
desired folding motion. Figure~\ref{fig:outout} corresponds to our construction
described below. Consider a prismoid spanning edge parameterized by $\theta$,
$\alpha$, and $\beta$. First we construct the spanning face creases on each side
according to the characterization in Section~\ref{sec:facecollapse}, and let
$c_\alpha$ and $c_\beta$ be the locations where respective spanning face creases
meet the spanning edge. Then we can flatten the primary vertex $o$ locally using two
creases so the adjacent faces collapse away from the convex side of the edge. In
order to flatten angle $\alpha + \beta$ of material with two creases while
keeping the bounding edges angle $\theta$ apart, the angle between the creases
must be $(\alpha+\beta+\theta)/2$. Any such creases will suffice for our
construction, but choosing one that is somewhat centered will keep the gadget
closer to the spanning edge. We choose the pair centered on the primary vertex
$o$,
so that the angle between a crease and its adjacent bottom edge is the same,
$(\alpha+\beta-\theta)/4$. Terminate each of these creases when they meet their
respective spanning face crease. Let these termination points be $p_\alpha$ and
$p_\beta$ respectively. Complete the crease pattern by adding creases along the
three pairwise shortest paths between these two points and non-primary vertex 
$q$. Some tedious but straightforward algebra confirms that this crease pattern is
always flat-foldable, with each vertex satisfying Kawasaki's
Theorem~\cite{GFALOP}.

This flat-foldable crease pattern corresponds to a folding mechanism that has a
single-degree of freedom because the internal vertices are nondegenerate and
degree-four. However, when this crease pattern unfolds rigidly, the \emph{base
angle} spanned by the two boundary edges incident to the primary vertex $o$ will
open monotonically from $\theta$ to $\alpha+\beta$ when fully unfolded. As a
thought exercise, let us fix the crease pattern folded to some three-dimensional
intermediate folded state so that the base angle is strictly between $\theta$
and $\alpha+\beta$, and then remove the two triangular faces from the crease
pattern. We have removed a quadrilateral of material that was creased from
$p_\alpha$ to $p_\beta$, i.e.\ in the folded state, the distance between
$p_\alpha$ and $p_\beta$ is the same as when the material is unfolded. Now we
rotate the bent $\alpha$ and $\beta$ spanning faces together around the axis
from $o$ to $q$ until the base angle is $\theta$, which brings points $p_\alpha$
and $p_\beta$ closer together. What remains is a folding of a subset of the
prismoid corner that matches the top and bottom face angles in an intermediate
folded state. It remains to replace the hole with the quadrilateral of paper we
removed. Noting that nonadjacent vertices of quadrilateral hole are now strictly
contractive in this intermediate folded state, we appeal to the construction
in~\cite{holefilling} to construct an isometry. Extend the spanning face crease
incident to $p_\alpha$ to a point $p'_\alpha$ whose distance to $p_\beta$ is
equal to the intrinsic distance along the surface, which exists by
\cite[Lemma~5]{holefilling}. Extending creases from $p'_\alpha$ to $o$, $q$, and
$p_\beta$ provides the crease pattern for this intermediate state. In fact we
parameterize the continuous family of crease patterns that folds this spanning
edge flat according to the location of $p'_\alpha$ along the segment between
$c_\alpha$ and $p_\alpha$, mapping the surface continuously to its flattened
state. 

\begin{lemma}
\label{lem:outout}
The \outout\ gadget has bounded size and stays between the top and bottom faces
during folding, while preserving intrinsic distances and avoiding crossings.
Further, the area supporting moving creases is also bounded and is proportional   
to the square of the gadget's height.
\end{lemma}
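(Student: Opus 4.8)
The plan is to treat the four assertions of the lemma---isometry, confinement between the two horizontal planes, non-crossing, and the $O(h^2)$ bound on the moving-crease area---separately, reusing the construction already set up for the gadget and anchoring every length to the spanning-edge height $h$.

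First, for preservation of intrinsic distances I would verify that each member of the one-parameter family indexed by the position of $p'_\alpha$ on the segment $\overline{c_\alpha p_\alpha}$ is a genuine isometric folded state. At the flat-folded extreme ($p'_\alpha = p_\alpha$) this is exactly the flat-foldable pattern already shown to satisfy Kawasaki's Theorem at every vertex; at each interior value I would invoke \cite[Lemma~5]{holefilling} to certify that the quadrilateral hole with corners $o, q, p_\beta, p'_\alpha$ admits an isometric refolding, the hypothesis being that the two diagonals of the hole quadrilateral are strictly contractive, which holds because rotating the bent spanning faces toward base angle $\theta$ strictly decreases both the $o$--$q$ and the $p_\alpha$--$p_\beta$ distances. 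I would then check that the placement of $p'_\alpha$, the induced crease directions, and the resulting geometry $f$ all depend continuously (indeed piecewise-analytically) on the parameter, so that sweeping $p'_\alpha$ from $c_\alpha$ to $p_\alpha$ yields a continuous motion from the original prismoid corner to the flat state.

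Second, for confinement and non-crossing I would use the spanning-face-collapse estimate from Section~\ref{sec:facecollapse}: each collapsing spanning face stays within distance $(1-\cos\phi)w/2$ of its projection onto the base, so the bent faces never leave the slab $z \in [z^-,z^+]$, and the top face descends only vertically onto the bottom face. Since the \outout\ gadget bends both incident faces \emph{away} from the convex side, the two bent flaps occupy opposite sides of the spanning edge and the hole-filling quadrilateral sits between them; combined with the layer ordering, this shows the folded images meet only along shared creases and never cross, with face-to-face contact occurring only in the final flat state. The point needing care here is the hole-filling region itself, whose vertical extent I would bound by applying the same $(1-\cos\phi)w/2$ estimate to the faces bounding the hole, confirming it too remains inside the slab throughout.

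Finally, for the size bounds I would note that the construction is anchored entirely to the spanning face creases, which lie at distance at most $w = O(h)$ from the spanning edge (the width of a spanning face at dihedral angle $\phi$ satisfies $h = w\sin\phi$), so $o$, $q$, $p_\alpha$, $p_\beta$, $c_\alpha$, $c_\beta$ and all creases lie within an $O(h)$ neighborhood of the edge, giving bounded gadget size. The only creases that move during the motion are the three incident to the sliding vertex $p'_\alpha$ (namely $\overline{p'_\alpha o}$, $\overline{p'_\alpha q}$, $\overline{p'_\alpha p_\beta}$), together with the segment of spanning face crease between $c_\alpha$ and $p_\alpha$; as $p'_\alpha$ traverses $\overline{c_\alpha p_\alpha}$ these sweep a region contained in the convex hull of $\{o, q, p_\alpha, p_\beta, c_\alpha\}$, whose diameter is $O(h)$, so the supporting area is $O(h^2)$ and scales with the square of the height as claimed, while everything outside this region folds as a rigid single-degree-of-freedom mechanism. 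I expect the main obstacle to be the interior-state analysis of the second and third paragraphs---showing uniformly across the whole family that the hole stays strictly contractive (so that \cite[Lemma~5]{holefilling} applies and its output varies continuously) and that the refolded hole neither escapes the slab nor crosses the bent faces---since this is the one place where the motion is not a rigid-origami mechanism and where the estimates from Section~\ref{sec:facecollapse} do not apply directly.
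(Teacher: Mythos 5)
Your proposal follows essentially the same route as the paper's proof: isometry via the construction itself (the hole-filling lemma \cite[Lemma~5]{holefilling} applied to the contractive quadrilateral), non-crossing via positive dihedral angles between adjacent faces plus a separating plane between the two bent spanning faces (the paper takes a plane through $o$ and $q$ that also contains a base edge, which is the precise version of your ``opposite sides of the spanning edge'' claim), and the moving-crease bound by observing that the creases incident to the sliding vertex $p'_\alpha$ sweep a region whose dimensions scale with the gadget height. The paper makes the size claims quantitative with closed-form expressions for $(p_\alpha - o)\cdot u_\alpha$ and $(p_\beta - o)\cdot u_\beta$, bounded whenever $0<\theta<\alpha+\beta\leq\pi$, and bounds the yellow moving-crease region by the product of $|oq|$ with $(p_\alpha-o)\cdot u_\alpha + (p_\beta-o)\cdot u_\beta$; your convex-hull argument reaches the same conclusion qualitatively.

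One step is wrong as stated, though easily repaired. You justify vertical confinement between the top and bottom faces by the estimate that a collapsing spanning face stays within distance $(1-\cos\phi)w/2$ of its projection onto the base. That estimate from Section~\ref{sec:facecollapse} bounds \emph{lateral} drift (how far folded material strays outside the face's projection), not vertical extent; indeed, for small $\phi$ one has $(1-\cos\phi)w/2 < w\sin\phi$, i.e., smaller than the gadget's initial height, so it cannot control the $z$-extent of the motion. The paper's confinement argument is different: the only moving vertices, $p'_\alpha$ and $p_\beta$, lie on spanning face creases, and Section~\ref{sec:facecollapse} notes separately that spanning face creases remain between the top and bottom faces throughout the collapse (strictly, except in the final flat state); hence the entire gadget folds within the slab. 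Substituting that observation for your lateral-drift citation closes the gap; the rest of your argument matches the paper.
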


\begin{proof}
The \outout\ gadget has finite size; specifically, the introduced points
$p_\alpha$ and $p_\beta$ are within bounded projected distances
from point $o$ relative to $\theta$, $\alpha$, and $\beta$:
\begin{align}
(p_\alpha - o)\cdot u_\alpha =
\frac{1}{2}\csc\theta
\cot\left(\frac{\alpha+\beta-\theta}{4}\right)
\left(\cos(\alpha+\theta)-\cos\beta\right), \\
(p_\beta - o)\cdot u_\beta = 
\frac{1}{2}\csc\theta
\cot\left(\frac{\alpha+\beta-\theta}{4}\right)
\left(\cos(\beta+\theta)-\cos\alpha\right).
\end{align}

\noindent
where $u_\alpha$ is the unit direction along the bottom edge from $o$ adjacent
to $\alpha$ and similarly for $u_\beta$. These distances are bounded when
$0<\theta<\alpha+\beta\leq\pi$ as is required. Also, points $p'_\alpha$ and
$p_\beta$ remain between the top and bottom faces because they exist on the
spanning face creases; thus the entire gadget folds between the top and bottom
faces.

Isometry is satisfied by construction. It remains to show that
faces do not intersect. First, dihedral angles between adjacent faces in the
construction are always positive except in the flat state, so local crossing
does not occur between adjacent faces. Alternatively, the faces bounding the
$\alpha$ and $\beta$ spanning face creases cannot intersect each other as they
will always exist on opposite sides of a plane passing through $o$ and $q$, in
particular any such plane that also contains a base edge. Thus, the constructed
\outout\ gadget avoids crossings local to the gadget throughout the folding
motion.

The area supporting moving creases is shown in yellow in
Figure~\ref{fig:outout}.    
The area is bounded by product of the distance between $o$ and $q$ and
$(p_\alpha-o)\cdot u_\alpha + (p_\beta-o)\cdot u_\beta$, which is   
proportional to the square of the gadget's height.
\end{proof}

\begin{figure}
\includegraphics[width=\linewidth]{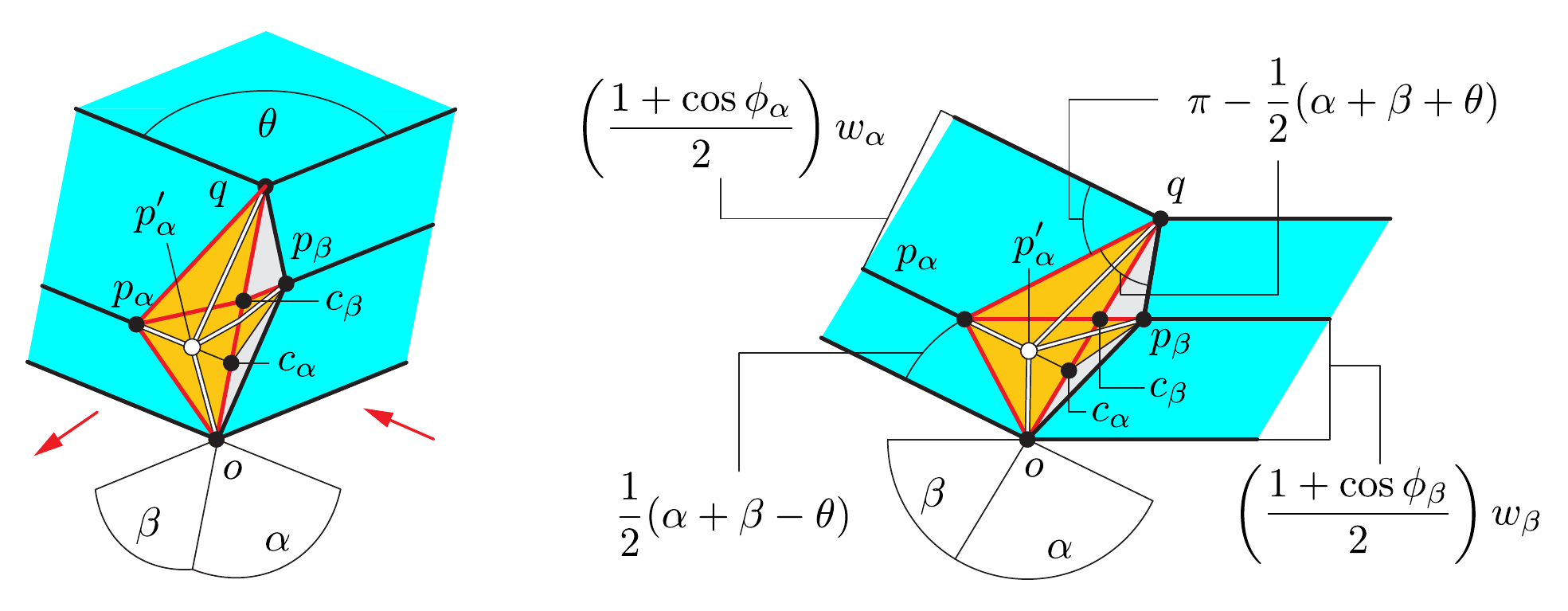}
\caption{Reference points and creases for the \inout\ gadget drawn on [Left]
the surface of a prismoid local to a spanning edge, and [Right] the development
of the spanning faces adjacent to the edge.} 
\label{fig:inout}
\end{figure}

\subsection{\inout\ Gadget}

Now we describe the construction of our \inout\ gadget, and show that it also
folds continuously while preserving intrinsic distances and avoiding crossings.
We again construct a flat folded state, introducing two crease pattern vertices,
moving one of which provides the desired folding motion. Figure~\ref{fig:inout}
corresponds to our construction described below. Again we construct the spanning
face creases for a prismoid spanning edge parameterized by $\theta$, $\alpha$,
and $\beta$. But this time we flatten the primary vertex $o$ locally using only one
crease so that one face collapses toward the convex side of the edge while the
other face collapses away. Without loss of generality, let the $\alpha$ side
collapse away from the convex side of the edge. We flatten the angle $\alpha +
\beta$ of material with one crease while keeping the bounding edges angle
$\theta$ apart, yielding a crease with angle $(\alpha+\beta-\theta)/2$ on the
$\alpha$ side and angle $(\alpha+\beta+\theta)/2$ on the $\beta$ side. We
terminate the crease when it meets the $\alpha$ spanning face crease at point
$p_\alpha$. Complete the crease pattern by adding a crease from $p_\alpha$ to
$q$. Again, trivial but tedious algebra confirms that this crease pattern is
always flat-foldable.

Similarly to the construction of the \outout\ gadget, we would like to identify
a quadrilateral of paper with contractive diagonals at intermediate folded
states. However, in this case there is no obvious single choice for where to
locate our stationary point $p_\beta$ along the $\beta$ spanning face crease.
Nevertheless, we continue using the same strategy as before. Again, we have a
single degree of freedom flat-foldable crease pattern whose base angle opens
monotonically from $\theta$ to $\alpha+\beta$ when unfolded. Fix this crease
pattern in some three-dimensional intermediate folded state so that the base
angle is strictly between $\theta$ and $\alpha+\beta$. But this time, cut the
folding along the segments from $p_\alpha$ to $o$ and $q$. Now when we rotate
the bent $\alpha$ and $\beta$ spanning faces toward each other. Every point on
the $\beta$ spanning face crease is separated by exactly the intrinsic distance
from point $p_\alpha$ before rotation. Additionally, point $c_\beta$ and any
point further from $o$ along the $\beta$ spanning face crease will be closer to
point $p_\alpha$ after rotation. Thus choosing any point $p_\beta$ along the ray
from $c_\beta$ would yield a quadrilateral with contractive diagonals, upon
which we could apply the construction in~\cite{holefilling} to find point
$p'_\alpha$ along the $\alpha$ spanning face crease that results in an isometry.
However, we cannot choose any such point. Consider for example point $c_\beta$.
When $\theta$ is less than $\pi/2$, $c_\beta$ can penetrate the bent $\alpha$
spanning face which we cannot allow. Thus we must choose some point along the
$\beta$ spanning face crease for which intersection does not occur.

With the \outout\ gadget, we were able to argue that the bent $\alpha$ and
$\beta$ spanning faces do not interact with each other by identifying a
separating plane. We use that same strategy to pick point $p_\beta$. Let
$p_\beta$ be the point on the $\beta$ spanning face crease such that the angle
at $q$ bounded by $p_\alpha$ and the top face edge on the $\alpha$ side equals
the angle at $q$ between $p_\alpha$ and $p_\beta$. This choice ensures that the
faces bounding the top face edges do not overlap in the folded state, so the
plane containing the top edge on the $\alpha$ side and point $o$ will always
separate the bent $\alpha$ and $\beta$ spanning faces. Note that when
$\pi - \beta < \theta$, this choice of $p_\beta$ will lie on the $\alpha$ side
of the line from $o$ to $q$. In such cases, $c_\beta$ being further away also
avoids intersection, so we use it for $p_\beta$ instead. Now, having fixed
$p_\beta$ for our spanning edge folded to some intermediate state, we have a
quadrilateral hole with vertices $o$, $p_\alpha$, $q$, and $p_\beta$ with
contractive diagonals. We again extend the spanning face crease incident to
$p_\alpha$ to a point $p'_\alpha$ whose distance to $p_\beta$ is equal to the
intrinsic distance along the surface, which exists by
\cite[Lemma~5]{holefilling}, and extending creases from $p'_\alpha$ to $o$, $q$,
and $p_\beta$ provides the crease pattern for this intermediate state. We
parameterize the continuous family of crease patterns in the same way as the
\outout\ gadget, by the location of $p'_\alpha$ along the segment between
$c_\alpha$ and $p_\alpha$, mapping the surface continuously to its flattened
state.

\begin{lemma}
\label{lem:inout}
The \inout\ gadget has bounded size and stays between the top and bottom faces
during folding, while preserving intrinsic distances and avoiding crossings.
Further, the area supporting moving creases is also bounded and is proportional   
to the square of the gadget's height.
\end{lemma}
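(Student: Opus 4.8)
The plan is to prove Lemma~\ref{lem:inout} by mirroring the structure of the proof of Lemma~\ref{lem:outout}, since the two gadgets share the same high-level architecture: a single-degree-of-freedom, degree-four vertex mechanism whose base angle opens monotonically, together with a hole-filling isometry parameterized by the location of $p'_\alpha$. The four claims to establish are: (i) bounded size, (ii) containment between the top and bottom faces, (iii) isometry with no crossings, and (iv) the moving-crease area bound. As in the \outout\ case, I would first compute the projected displacement of the introduced vertex $p_\alpha$ from $o$ and confirm it is bounded for the admissible parameter range $0<\theta<\alpha+\beta\leq\pi$ with $|\alpha-\beta|<\theta$; the single crease on the $\alpha$ side has angle $(\alpha+\beta-\theta)/2$, so an analogous closed-form expression for $(p_\alpha-o)\cdot u_\alpha$ gives the size bound. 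Containment follows from the same observation as before: $p_\alpha$ (and $p'_\alpha$) lie on the $\alpha$ spanning face crease and $p_\beta$ lies on the $\beta$ spanning face crease, so all introduced points sit strictly between the top and bottom faces by the final remark of Section~\ref{sec:facecollapse}.

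The isometry is again satisfied by construction, relying on \cite[Lemma~5]{holefilling} to produce $p'_\alpha$ realizing the intrinsic distance to $p_\beta$ across the contractive quadrilateral $o,p_\alpha,q,p_\beta$; this reuse is legitimate because the construction preceding the lemma already verified that the diagonals are contractive at every intermediate state. The crossing argument is where the \inout\ case genuinely diverges from \outout\ and is the step I expect to require the most care. In the \outout\ gadget, the plane through $o$ and $q$ containing a base edge separated the two bent spanning faces for free; here, because one face folds \emph{toward} the convex side, no such symmetric separator is automatic. I would instead invoke the separating plane engineered in the construction itself, namely the plane containing the top edge on the $\alpha$ side and point $o$, which was defined precisely so that the faces bounding the top edges do not overlap. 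I would argue that this plane keeps the bent $\alpha$ and $\beta$ spanning faces on opposite sides throughout the motion, and separately handle the regime $\pi-\beta<\theta$ where $p_\beta$ is relocated to $c_\beta$, checking that the larger distance there still prevents the penetration of $c_\beta$ into the bent $\alpha$ face noted in the text. Positivity of dihedral angles between adjacent faces away from the flat state rules out local self-intersection exactly as before.

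Finally, the moving-crease area is the region that must flex, bounded by the product of the $o$--$q$ distance and the total $\alpha$-side projected extent; since the gadget size scales linearly with height (the spanning edge has unit length only after the affine normalization, and all displacements are homogeneous of degree one in the height), the area is proportional to the square of the gadget's height, matching the \outout\ bound. The main obstacle, to reiterate, is verifying that the chosen separating plane genuinely works across the full admissible parameter range and the full continuous family of intermediate states, including the case split at $\pi-\beta=\theta$; the size, containment, and area claims are routine once the separation is established.
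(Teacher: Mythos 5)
Your proposal is correct and follows essentially the same route as the paper: explicit projected-distance bounds for $p_\alpha$ and $p_\beta$ give bounded size, containment follows from the introduced points lying on spanning face creases, isometry comes from the hole-filling construction, and the area bound is the product of the $o$--$q$ distance with the projected extents. Your extra care about the crossing argument --- using the plane through the $\alpha$-side top edge and $o$, with the case split at $\pi-\beta=\theta$ --- is exactly what the paper's construction text establishes before the lemma (the proof itself merely says the rest is ``identical'' to Lemma~\ref{lem:outout}), so you have correctly identified where the real work lives rather than diverged from it.
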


\begin{proof}
The \inout\ gadget has finite size; specifically, the introduced points
$p_\alpha$ and $p_\beta$ are within constant projected distances
from point $o$ relative to $\theta$, $\alpha$, and $\beta$:
\begin{align}
(p_\alpha - o)\cdot u_\alpha &=
\frac{1}{2}\csc\theta
\cot\left(\frac{\alpha+\beta-\theta}{2}\right)
\left(\cos\beta-\cos(\alpha+\theta)\right), \\
\begin{split}
(p_\beta - o)\cdot u_\beta &= 
\frac{1}{2}\csc\theta\max\left[\,\cot\beta\left(
\cos(\beta+\theta)-\cos\alpha\right)\right., \\
&\hspace{2.52cm}\left. 
\cot\theta\left(\cos(\beta-\theta)-\cos\alpha\right)
+2\cos\beta\sin\theta
\right].
\end{split}
\end{align}

\noindent
where $u_\alpha$ is the unit direction along the bottom edge from $o$ adjacent
to and $\alpha$ and similarly for $u_\beta$. These distances are bounded when
$0<\theta<\alpha+\beta\leq\pi$ as is required. The remaining argument is
identical to the proof of Lemma~\ref{lem:outout}.

The area supporting moving creases is shown in yellow in Figure~\ref{fig:inout}.
The area is bounded by product of the distance between $o$ and $q$ and 
$(p_\alpha-o)\cdot u_\alpha + (p_\beta-o)\cdot u_\beta$, which is   
proportional to the square of the gadget's height.
\end{proof}


\section{Slicing}
\label{sec:slice}

In this section, we show how to slice our polyhedral manifold into
\emph{prismoids}
so techniques from the previous section can be applied. Once sliced, we can
collapse the subset in each slab separately. Because we are not restricting our
input to be homeomorphic to a sphere, components within a slab might not be
prismoids, but instead subsets of prismoids (i.e., missing faces). To deal with
this generalization, we define a \emph{prismoidal wall} to be a (non-strict)
subset of the spanning faces of some prismoid. Further, we define a
\emph{prismoidal slab} to be a finite set of disjoint prismoidal walls spanning
two planes, where each such plane is a \emph{base} of the slab.

We will slice prismoidal slabs along \emph{slice planes}, planes parallel to the
base strictly between the top and bottom of the slab. Subdividing a prismoidal
slab along a slice plane results in two prismoidal slabs with smaller height
than the original. Slicing will occur in multiple stages. First, we show how to
break a polyhedral manifold into countably infinitely many prismoidal slabs.
Next, we further subdivide each slab so the prismoidal walls in a slab have
disjoint projections onto the slab's base. Finally, we slice prismoidal walls
one last time in order to accommodate the local bounds for flattening spanning
edges and faces using the gadgets described in Section~\ref{sec:collapse}.

\subsection{Prismoidal Slab Decomposition}

\begin{lemma}
\label{thm:slice}
Any polyhedral manifold can be decomposed into a countably infinite set of
prismoidal slabs.
\end{lemma}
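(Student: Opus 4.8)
The plan is to slice the manifold with horizontal planes whose heights \emph{accumulate} at every vertex, so that no vertex ever lies strictly inside a slab or on a slab boundary, while the shrinking slabs near each vertex supply the required countably infinite family. First I would fix a generic ``height'' direction: choose coordinates so that the $z$-coordinates of the finitely many vertices are pairwise distinct. Such a direction exists because the directions equating two vertex heights form a measure-zero set, and distinct vertex heights automatically rule out horizontal faces and horizontal edges, since every face and edge has at least two vertices. Let $h_1 < h_2 < \cdots < h_n$ be the vertex heights, so the entire manifold lies in the closed region $h_1 \le z \le h_n$.

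Next I would place the slice planes interval by interval. Inside each open gap $(h_i, h_{i+1})$, which contains no vertex height, I insert a bi-infinite increasing sequence of slice heights $\cdots < s_{-1} < s_0 < s_1 < \cdots$ accumulating at $h_i$ as the index tends to $-\infty$ and at $h_{i+1}$ as it tends to $+\infty$ (for instance, geometric sequences running from the midpoint out to each endpoint). The portion of the manifold between two consecutive slice heights $[s_k, s_{k+1}]$ is a candidate slab. This produces countably many slabs per gap and hence countably infinitely many overall, which together cover the manifold except for the cross-sections at the vertex heights $h_1, \ldots, h_n$ themselves; these are precisely the accumulation limits of the vanishing slabs, and the model of Section~\ref{sec:model} already allows such limit points to lie outside every piece.

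Then I would verify that each candidate slab is a genuine prismoidal slab. By construction no vertex has height strictly inside $[s_k, s_{k+1}] \subseteq [h_i, h_{i+1}]$ (the only nearby vertices sit at the excluded endpoints $h_i, h_{i+1}$), and no vertex lies on a boundary plane (the slice heights avoid every $h_j$). Subdividing each face into triangles by diagonals between its existing vertices refines the surface without adding vertices or changing heights; intersecting such a triangle with a slab containing none of its three vertices yields a \emph{trapezoid} with horizontal top and bottom edges and two spanning sides --- never a triangle, exactly because no vertex is interior --- which is precisely a spanning face. Likewise, each edge meets the slab in a segment running all the way from the bottom plane to the top plane, since its endpoints are vertices and thus never interior. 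Hence the manifold's intersection with the slab is a finite gluing of such trapezoids along spanning edges, i.e.\ a finite disjoint union of prismoidal walls, one per connected component.

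The step I expect to require the most care is confirming the defining prismoid property that \emph{every vertex is incident to at most two spanning trapezoids}, which is exactly where the 2-manifold hypothesis is used. Each spanning edge is a piece of a manifold edge and so borders at most two faces, hence at most two trapezoids. Moreover each top- or bottom-plane vertex of the slab is the endpoint of a single spanning edge: two distinct manifold edges can only meet at a vertex, and no vertex lies on a slice plane, so no two spanning edges share a plane endpoint. Every slab vertex is therefore incident to at most two spanning trapezoids, which completes the identification of each slab as a prismoidal slab and of the whole collection as the desired countably infinite decomposition.
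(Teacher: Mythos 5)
Your proposal is correct and follows essentially the same approach as the paper: choose a generic axis so vertex heights are distinct, then use countably many parallel slice planes whose heights accumulate at every vertex height, so that no resulting slab contains a vertex and each slab is prismoidal. The only cosmetic difference is that you specify the accumulating slice heights directly (e.g., geometric sequences in each gap between consecutive vertex heights), whereas the paper first slices through each vertex and then recursively bisects the non-prismoidal slabs adjacent to vertices, yielding the same dyadic-style accumulation; your explicit verification that each slab's faces are trapezoids and that each slab vertex meets at most two spanning faces is detail the paper leaves implicit.
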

\begin{proof}

Orient the polyhedron so that each vertex has unique projection along some axis,
which will be the case for a generic choice of axis. Then slice a plane through
each vertex normal to that axis; see Figure~\ref{fig:manifoldslice} [Left].
This division decomposes the polyhedral manifold into a set of prismatoidal
slabs (not necessarily prismoidal slabs as faces adjacent to a sliced vertex may
be triangles). For every non-prismoidal slab, slice along the bisecting plane
between its top and bottom plane. Because each non-prismoidal slab contains
exactly one vertex by construction, bisecting them yields one prismoidal slab
and one non-prismoidal slab. Recursively bisecting all non-prismoidal slabs in
this way will decompose the polyhedral manifold into a countably infinite set of
prismoidal slabs, with slab heights approaching zero near each vertex.
Figure~\ref{fig:manifoldslice} [Right] shows this division applied to a
prismatoid with one vertex that is not degree-three. \end{proof}

\begin{figure}
\includegraphics[width=\linewidth]{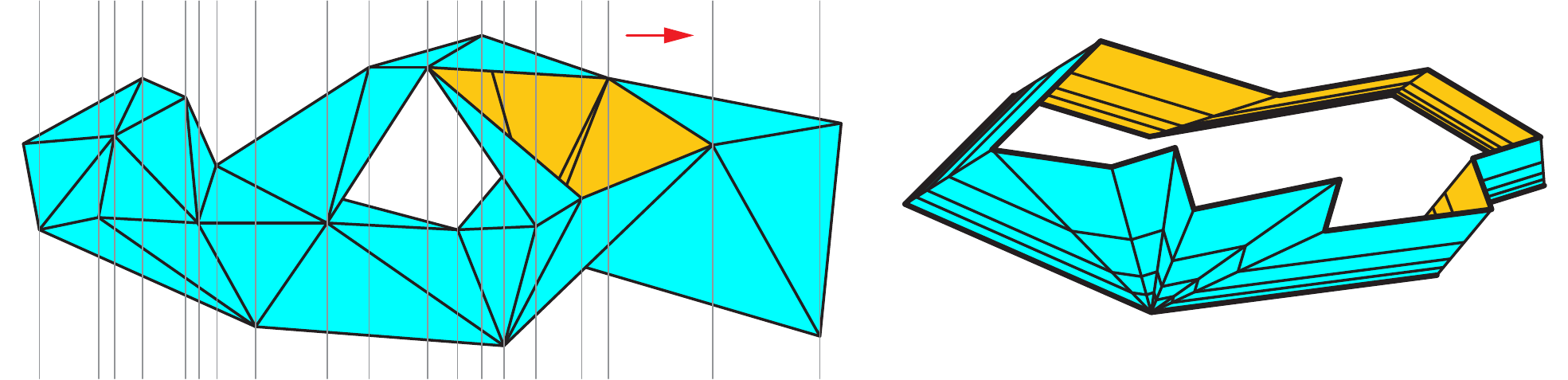}
\caption{[Left] Slicing a polyhedral manifold through vertices along planes
normal to a direction (indicated by the red arrow) onto which vertices have unique projection. [Right]
Slicing a prismatoid with one vertex that is not degree-three into an infinite
set of prismoids.}
\label{fig:manifoldslice}
\end{figure}

\subsection{Projection Disjoint Decomposition}

We call a prismoidal slab to be \emph{projection disjoint} if nonadjacent faces
in the slab do not overlap in the projection of the slab onto its base.

\begin{lemma}
\label{thm:separate}
Any prismoidal slab can be decomposed into a finite set of prismoidal slabs
that are each projection disjoint.
\end{lemma}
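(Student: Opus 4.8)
The plan is to reduce all projection overlaps to a finite collection of pairwise vertical separations, and then slice thinly enough to resolve them all at once. Since a prismoidal slab consists of finitely many walls, each a subset of finitely many spanning trapezoids, the slab contains only finitely many faces, hence only finitely many pairs of \emph{nonadjacent} faces. The key structural fact is that any two nonadjacent faces are disjoint in $\mathbb{R}^3$: in a prismoid two spanning trapezoids meet only along a shared spanning edge (each vertex is incident to at most two spanning trapezoids), so faces sharing no edge also share no vertex, and being closed and disjoint they lie at positive distance.

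First I would fix coordinates with the base horizontal and the slicing axis vertical, and let $\pi$ denote vertical projection onto the base. For a non-vertical face $F_i$ and a base point $p \in \pi(F_i)$, the vertical line through $p$ meets $F_i$ at a unique height $z_i(p)$, and $z_i$ is continuous on the compact set $\pi(F_i)$. For a nonadjacent pair $F_i, F_j$ whose projections overlap, restrict attention to the compact overlap region $R_{ij} = \pi(F_i)\cap\pi(F_j)$. If $z_i(p)=z_j(p)$ for some $p\in R_{ij}$, then $(p,z_i(p))$ would lie on both faces, contradicting disjointness; hence $|z_i-z_j|$ is continuous and strictly positive on $R_{ij}$, so it attains a positive minimum $\delta_{ij}>0$. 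Taking $\delta=\min_{ij}\delta_{ij}$ over the finitely many overlapping nonadjacent pairs gives $\delta>0$.

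I would then slice the slab of height $H$ by equally spaced horizontal planes into $\lceil H/\delta\rceil$ sub-slabs, each of height strictly less than $\delta$. By the earlier observation that subdividing a prismoidal slab along a slice plane produces prismoidal slabs, each sub-slab is again a prismoidal slab, and there are finitely many of them. In a sub-slab spanning a height interval $J$ of length $<\delta$, a base point $p$ lying in the projections of the pieces of both $F_i$ and $F_j$ would require $z_i(p),z_j(p)\in J$, forcing $|z_i(p)-z_j(p)|<\delta\le\delta_{ij}$, which is impossible on $R_{ij}$. Hence no such $p$ exists, so every nonadjacent pair has disjoint projections and each sub-slab is projection disjoint.

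The main obstacle is making the vertical-gap argument rigorous for \emph{degenerate} faces, in particular faces parallel (or nearly parallel) to the slicing axis, where $z_i$ is no longer single-valued and a naive $\delta_{ij}$ could vanish. I would handle these by replacing the scalar $z_i(p)$ with the closed height-interval $Z_i(p)=\{z:(p,z)\in F_i\}$: disjointness of nonadjacent faces gives $Z_i(p)\cap Z_j(p)=\emptyset$ for every $p$, and the same compactness argument bounds the gap between these intervals from below on $R_{ij}$. In fact a vertical spanning face projects to a segment and, spanning the full height, cannot have any genuine area overlap with a disjoint nonadjacent face, so such faces contribute no overlapping pairs at all. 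The remaining steps---that slicing preserves the prismoidal-wall structure and the disjointness of distinct walls, and that adjacency of faces is unchanged by horizontal slicing---are routine.
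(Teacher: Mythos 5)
Your proof is correct, and it is not the paper's argument: both proofs cut the slab into finitely many sub-slabs of uniform height, but they certify that height differently, and the difference matters. The paper's proof is in-plane and explicit: it takes $\psi$, the minimum tilt angle of a face, and $s$, the minimum distance between a vertex and a nonadjacent edge in the top or bottom plane, slices at height $(s/2)\sin\psi$, and concludes from the resulting projection-width bound of $s/2$ that nonadjacent faces cannot overlap in projection. Your proof is vertical and compactness-based: nonadjacent faces are disjoint compact sets, so over any projection overlap they have a positive minimum vertical gap $\delta$, and sub-slabs of height less than $\delta$ cannot contain points of both faces over the same base point.

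Your route is in fact the more watertight one. The paper measures $s$ only at the two extreme planes, but nonadjacent faces can come far closer at intermediate heights than they are at the top and bottom: take two long, shallow trapezoids (slope $1/10$, say) whose projections cross like an X and which nearly touch near the crossing; there $s$ and $\psi$ are of constant order, yet sub-slabs of height $(s/2)\sin\psi$ still contain overlapping projections of the two faces near the crossing. Your $\delta$ automatically contracts to the size of the near-touching gap, which is exactly what is needed; in effect, your compactness argument supplies the uniform positive separation over \emph{all} horizontal cross-sections that the paper's inference tacitly assumes but does not prove. What the paper's version buys in exchange is constructiveness: its slicing height is an explicit function of the slab's geometry, whereas your $\delta$ is obtained nonconstructively. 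Three small improvements to yours: (i) $\lceil H/\delta\rceil$ equal sub-slabs have height at most $\delta$, not strictly less than $\delta$, so take one extra slab; (ii) you can simply set $\delta=\min_{i,j}\mathrm{dist}(F_i,F_j)$ over nonadjacent pairs, since two points of $F_i$ and $F_j$ on the same vertical line differ in height by at least $\mathrm{dist}(F_i,F_j)$ --- this removes the need for continuity of $z_i$, for compactness of $R_{ij}$, and for the interval-valued functions $Z_i$, and it handles vertical faces uniformly; (iii) do not lean on the remark that a vertical face has no ``genuine area overlap'': the later flattening step (Lemma~\ref{thm:local-bounds}) needs projections of nonadjacent faces to be honestly disjoint, which your gap argument (not the area remark) is what actually delivers.
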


\begin{proof}

By definition, the prismoidal walls in a prismoidal slab are disjoint, and
because each face spans the top and bottom planes, each face has finite slope.
Let $\psi$ be the smallest tilt angle of any face in the prismoidal slab, and
let $s$ be the shortest distance between any vertex and a non-adjacent edge in
the same plane; see Figure~\ref{fig:projection}. Slice the prismoidal slab into
a set of shorter prismoidal slabs, each with height no greater than $(s/2)\sin
\psi$. Then the width of the projection of any face of the new slabs onto the
base will be no larger than $s/2$. Because $s$ is the minimum distance between a
vertex and an non-adjacent edge, these prismoidal slabs must be projection
disjoint. \end{proof}

\begin{figure}
\includegraphics[width=\linewidth]{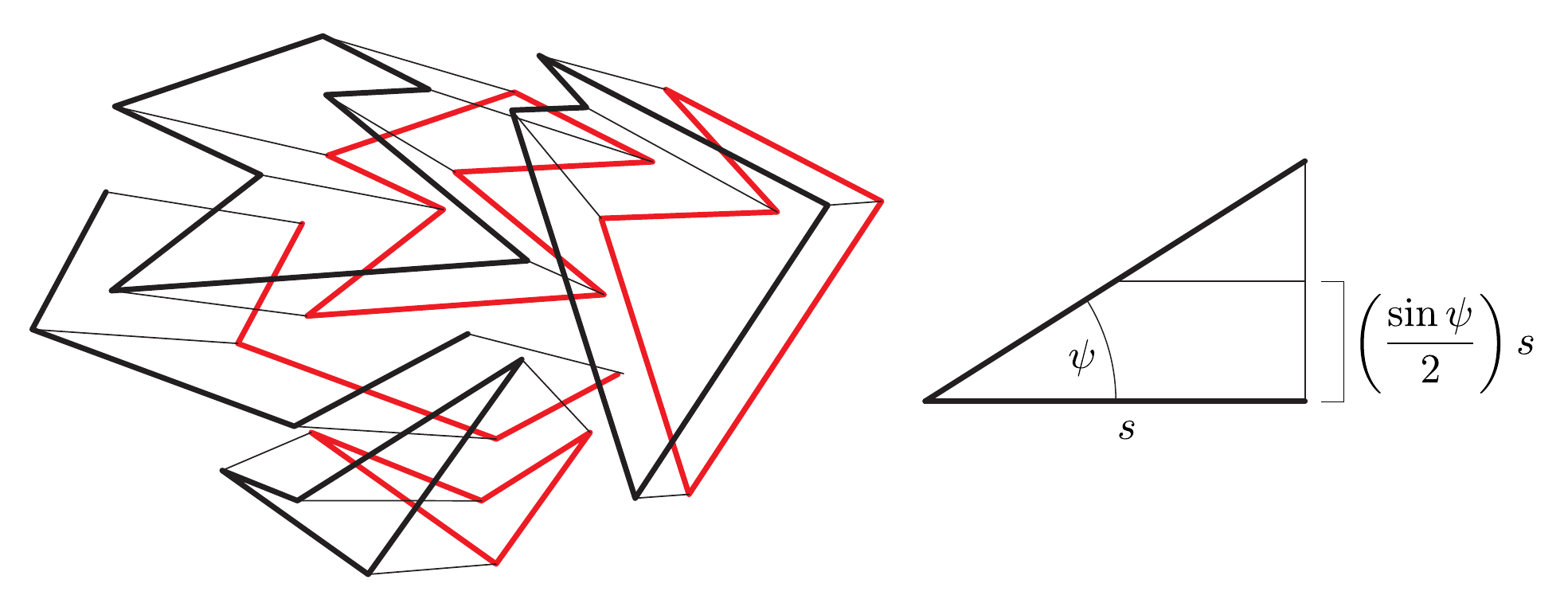}
\caption{[Left] A prismoidal slab that is not projection disjoint. [Right]
Calculating a split height that will allow decomposition into a set of uniform 
height slabs that are projection
disjoint. Angle $\psi$ is the smallest angle of a prismoidal spanning face
relative to the base and $s$ is the shortest distance between a vertex and a
nonadjacent edge in either the top or bottom planes.}
\label{fig:projection}
\end{figure}

\subsection{Flattening Projection Disjoint Prismoidal Slabs}

\begin{lemma}
\label{thm:local-bounds}

Any projection-disjoint prismoidal slab can be continuously flattened while
preserving intrinsic distances and avoiding crossings. Further, the area
supporting moving creases can be made arbitrarily small.

\end{lemma}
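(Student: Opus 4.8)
The plan is to flatten each prismoidal wall of the slab independently, using the edge and face gadgets of Section~\ref{sec:collapse} synchronized into a single vertical collapse, and to control crossings and moving-crease area by first slicing the slab so that its height is small. The spanning faces of a wall, read in order around the underlying prismoid, form paths and cycles with consecutive faces meeting along a shared spanning edge. I label each face \In\ or \Out\ according to whether its incident gadgets bend it toward or away from the convex side. A path component is bipartite, so it admits a proper $2$-coloring in which every shared spanning edge becomes \inout; a cycle component admits such a coloring unless it has odd length, in which case exactly one edge must be monochromatic, and I arrange the coloring (swapping the two labels if needed) so that this single defect is \outout. Every shared spanning edge is thus assigned the \inout\ gadget, with at most one \outout\ gadget per odd cycle; a spanning edge lying on the manifold boundary is incident to a single face and needs no edge gadget. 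Each spanning face additionally carries its single spanning-face crease from Section~\ref{sec:facecollapse}.

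Next I would assemble these pieces into one continuous motion driven by a single parameter, the height of the horizontal top face above the bottom plane, decreasing from the slab height to $0$. As this height decreases, each spanning face folds along its spanning-face crease and each edge gadget advances through its one-parameter family (parameterized in Section~\ref{sec:collapse} by the location of $p'_\alpha$), all clocked to the common descent so that the top face moves purely vertically and the horizontal interfaces stay flat. By Lemmas~\ref{lem:outout} and~\ref{lem:inout} together with the face-collapse bound of Section~\ref{sec:facecollapse}, each gadget and each folded face keeps its material strictly between the top and bottom planes, preserves intrinsic distances by construction, and avoids crossings locally: adjacent faces keep positive dihedral angle except at the very end, and the two faces at each edge stay on opposite sides of a separating plane through $o$ and $q$.

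It remains to upgrade these local guarantees to a global embedding and to make the moving-crease area small; this is where the real work lies. Each gadget footprint and all folded material stay within a neighborhood of the base projection of its face whose width is proportional to the slab height, since the projected distances of $p_\alpha$ and $p_\beta$ from $o$ in Lemmas~\ref{lem:outout} and~\ref{lem:inout}, and the face-collapse bound $(1-\cos\phi)w/2$, all scale with height. Because horizontal re-slicing does not change base projections, I first slice the slab into $k$ shorter projection-disjoint slabs and take $k$ large. Then these neighborhoods become thin enough that (i) the two gadgets at the ends of a shared face no longer overlap, and (ii) nonadjacent faces, whose base projections are disjoint by projection-disjointness, stay separated; together with ``stays between the top and bottom'' this rules out all crossings. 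Finally, the moving-crease area is a finite sum of per-edge terms each proportional to the square of the sub-slab height, so slicing one slab of height $H$ into $k$ slabs of height $H/k$ gives total area proportional to $H^2/k$, which tends to $0$; hence it can be made arbitrarily small.

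The main obstacle I anticipate is exactly this composition step. The \inout\ and \outout\ gadgets are designed and verified in isolation around a single spanning edge, so the delicate part is guaranteeing that, once many of them are placed on a common wall and across many disjoint walls and run simultaneously, their footprints neither collide with each other on a shared face nor with the material of other walls, all while the top face descends rigidly and vertically. The clean scaling of footprint size with height is what makes this tractable, reducing global non-crossing to choosing the slab height small enough through extra slicing.
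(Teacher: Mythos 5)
Your proposal is correct and follows essentially the same route as the paper's proof: assign \inout\ gadgets (with at most one \outout\ defect) as in Section~\ref{sec:PrismoidOverview}, slice the slab further so that the gadgets' reach---proportional to slab height---fits within the separation guaranteed by projection-disjointness, and use the height-squared scaling of per-gadget moving-crease area to make the total arbitrarily small. The extra details you supply (the bipartite $2$-coloring and the $H^2/k$ computation) are just elaborations of steps the paper delegates to Sections~\ref{sec:PrismoidOverview} and~\ref{sec:collapse}.
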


\begin{proof}
Our approach will be to use the \inout\ and \outout\ gadgets from
Section~\ref{sec:collapse} to collapse the prismoidal walls contained in the
prismoidal slab. However, there may not be room to construct the gadgets if the
spanning edges are too close together. To ensure spanning edges are well
separated, we slice the prismoidal slab one last time. The proofs of
Lemmas~\ref{lem:outout} and \ref{lem:inout} show that our spanning edge gadgets
are local to their spanning edge, existing within a distance proportional to the
height of the gadget and Section~\ref{sec:PrismoidOverview} specifies how these gadgets can be assigned to a prismoid. Further, Section~\ref{sec:facecollapse} bounds the
extension of spanning faces outside their projection onto the base, again within
a distance proportional to the height of the gadget. Slicing the slab in half
will reduce the reach of each edge gadget and face extension in half, while the
distances between them will stay fixed. Thus we can decompose the projection
disjoint prismoidal slab into a finite set of prismoidal slabs that each has
room to fold the gadgets at each spanning edge, and collapse each spanning face.
Constructing these gadgets, we can flatten them while preserving intrinsic
distances and avoiding crossing because we have sliced such that non-local
interactions do not occur.

The proofs of Lemmas~\ref{lem:outout} and \ref{lem:inout} also show that for
each gadget, the area supporting moving creases is proportional to the
height of the prismoidal wall squared. Thus, we can reduce this area arbitrarily
by further subdivision. 
\end{proof}


\section{Flattening Polyhedral Manifolds}
\label{sec:proof}

Knowing how to locally flatten prismoids using the gadgets from
Section~\ref{sec:collapse}, and how to split polyhedral manifolds into
projection-disjoint prismoidal slabs whose geometry is well-separated relative
to slab height from Section~\ref{sec:slice},
the proof of Theorem~\ref{thm:flatten} follows directly:

\begin{proof}[Proof of Theorem~\ref{thm:flatten}] 

Slice the prismoidal manifold into a countably infinite set of prismoidal slabs using the
construction in Lemma~\ref{thm:slice}. Then decompose each prismoidal slab into
projection disjoint prismoidal slabs with well-separated geometry according to
the constructions in Lemma~\ref{thm:separate} and Lemma~\ref{thm:local-bounds}.
Then by Lemma~\ref{thm:local-bounds}, we can continuously and independently flatten each slab while preserving intrinsic distances and avoiding crossings within each
slab. The flattening motion of each slab brings together the top and bottom
without transverse translation perpendicular to the axis, while
Lemmas~\ref{lem:outout} and \ref{lem:inout} guarantee that geometry within each
slab stays between the slab's top and bottom planes during folding; so crossing
cannot occur between slabs. Lastly, if we want to bound the area supporting
moving creases below any positive value, Lemma~\ref{thm:local-bounds} guarantees
that we can with further subdivision.
\end{proof}


\section{Conclusion}

In this paper, we showed how to continuously flatten finite polyhedral manifolds
using countably many creases with only finitely many accumulation lines. The
obvious open problem is whether continuous flattening is possible with only
finitely many creases at each time (still, of course, with movable creases that
slide over a 2D region of points over time). In the other direction,
perhaps our approach could be generalized to polyhedral manifolds with countably
many vertices, edges, and faces. Such a result may require a more general model
of what folding means for countably many creases (say, directly generalizing
\cite[Chapter~11]{GFALOP}), which is another interesting direction for pursuit;
our current model is very specific to our slice-based approach.

Our result is tight in a couple of senses.
We cannot hope to flatten surfaces with a positive area of nonflat points
(e.g., smooth surfaces like a sphere), even instantaneously, because a
flat folded state would need to be creased at all of those points.
(A more reasonable model in this case is contractive folding; see
\cite{SphereWrapping_CGTA}.)
Similarly, we cannot hope to further flatten from a plane to a line (or point)
without creases becoming everywhere-dense.

\bibliography{flattening}

\newcommand{\etalchar}[1]{$^{#1}$}
\begin{thebibliography}{ADD{\etalchar{+}}14}

\bibitem[ACD{\etalchar{+}}15]{abel2015rigid}
Zachary Abel, Robert Connelly, Erik~D. Demaine, Martin~L. Demaine, Thomas~C.
  Hull, Anna Lubiw, and Tomohiro Tachi.
\newblock Rigid flattening of polyhedra with slits.
\newblock In {\em Origami$^6$: I. Mathematics}, pages 109--118. American
  Mathematical Soc., 2015.

\bibitem[ADD{\etalchar{+}}14]{abel2014continuously}
Zachary Abel, Erik~D. Demaine, Martin~L. Demaine, Jin-ichi Itoh, Anna Lubiw,
  Chie Nara, and Joseph O'Rourke.
\newblock Continuously flattening polyhedra using straight skeletons.
\newblock In {\em Proceedings of the 30th Annual Symposium on Computational
  Geometry}, page 396, 2014.

\bibitem[BH11]{Bern-Hayes-2011-flattening}
Marshall Bern and Barry Hayes.
\newblock Origami embedding of piecewise-linear two-manifolds.
\newblock {\em Algorithmica}, 59(1):3--15, 2011.

\bibitem[CSW97]{Connelly-Sabitov-Walz-1997}
R.~Connelly, I.~Sabitov, and A.~Walz.
\newblock The bellows conjecture.
\newblock {\em Beitr\"age zur Algebra und Geometrie (Contributions to Algebra
  and Geometry)}, 38(1):1--10, 1997.

\bibitem[DDIL09]{SphereWrapping_CGTA}
Erik~D. Demaine, Martin~L. Demaine, John Iacono, and Stefan Langerman.
\newblock Wrapping spheres with flat paper.
\newblock {\em Computational Geometry: Theory and Applications},
  42(8):748--757, 2009.

\bibitem[DDIN15]{FlatteningOrthogonal_JCDCGG2015full}
Erik~D. Demaine, Martin~L. Demaine, Jin-Ichi Itoh, and Chie Nara.
\newblock Continuous flattening of orthogonal polyhedra.
\newblock In {\em Revised Papers from the 18th Japan Conference on Discrete and
  Computational Geometry and Graphs (JCDCGG 2015)}, volume 9943 of {\em Lecture
  Notes in Computer Science}, pages 85--93, Kyoto, Japan, September 14--16
  2015.

\bibitem[DDL01]{Demaine-Demaine-Lubiw-2001-flattening-manuscript}
Erik~D. Demaine, Martin~L. Demaine, and Anna Lubiw.
\newblock Flattening polyhedra.
\newblock Manuscript, 2001.

\bibitem[DK15]{holefilling}
Erik~D. Demaine and Jason~S. Ku.
\newblock Filling a hole in a crease pattern: Isometric mapping from prescribed
  boundary folding.
\newblock In {\em Origami$^6$: Sixth International Meeting of Origami Science,
  Mathematics, and Education}, pages 177--188, 2015.

\bibitem[DO07]{GFALOP}
Erik~D. Demaine and Joseph O'Rourke.
\newblock {\em Geometric Folding Algorithms: Linkages, Origami, Polyhedra}.
\newblock Cambridge University Press, July 2007.

\bibitem[IN10]{Itoh-Nara-2011}
Jin-ichi Itoh and Chie Nara.
\newblock Continuous flattening of platonic polyhedra.
\newblock In Jin Akiyama, Jiang Bo, Mikio Kano, and Xuehou Tan, editors, {\em
  Revised papers from the 9th International Conference on Computational
  Geometry, Graphs and Applications}, pages 108--121, Dalian, China, November
  2010.

\bibitem[INV11]{itoh2011continuous}
Jin-ichi Itoh, Chie Nara, and Costin V{\^\i}lcu.
\newblock Continuous flattening of convex polyhedra.
\newblock In {\em Revised Papers from the 14th Spanish Meeting on Computational
  Geometry}, pages 85--97, Alcal{\'a} de Henares, Spain, 2011. Springer.

\bibitem[Ku]{webapp}
Jason~S. Ku.
\newblock Prismatoid flattening gadgets.
\newblock \url{http://jasonku.mit.edu/slicefold/}.

\bibitem[MN17]{nara2016infimum}
Kazuki Matsubara and Chie Nara.
\newblock Continuous flattening of $\alpha$-trapezoidal polyhedra.
\newblock {\em Journal of Information Processing}, 25:554--558, 2017.

\end{thebibliography}
\bibliographystyle{alpha}

\end{document}